\definecolor{todo}{RGB}{200, 0, 0}
\definecolor{nk}{RGB}{0, 100, 0}
\definecolor{ho}{RGB}{0, 50, 150}
\definecolor{black}{gray}{0}
\definecolor{white}{gray}{1}
\acrodef{IR}{information retrieval}
\acrodef{LTR}{learning to rank}
\acrodef{RecSys}{recommender systems}
\acrodef{CTR}{click-through rate}
\acrodef{MF}{matrix factorization}
\acrodef{LBD}{learned beta distributions}
\acrodef{CMF}{Confidence-Aware Matrix Factorization}
\acrodef{MF}{Matrix Factorization}
\acrodef{PDF}{probability density function}
\acrodef{PMF}{probability mass function}
\acrodef{CDF}{cumulative distribution function}
\acrodef{RMSE}{root mean squared error}
\acrodef{MAE}{mean absolute error}
\acrodef{NDCG}{non discounted cumulative gain}
\acrodef{BPMF}{Bayesian Probabilistic Matrix Factorization}
\acrodef{LBD-S}{LBD Static}
\acrodef{LBD-A}{LBD Adaptive}
\acrodef{PRP}{probability ranking principle}
\acrodef{LTR}{learning to rank}
\acrodef{SERP}{search engine result page}
\acrodef{PL}{Plackett-Luce}
\acrodef{VLPL}{variable length Plackett-Luce}
\acrodef{EA}{expected attractiveness}
\acrodef{MLP}{multilayer perceptron}
\theoremstyle{definition}
\newtheorem{theorem}{Theorem}[section] %
\newcommand{\E}[1]{\mathbb{E}_{#1}}
\newcommand{\dm}[0]{\frac{\delta}{\delta m}}
\newcommand{\one}[1]{\mathds{1}{\Big[#1\Big]}}
\newcommand{\Rq}[0]{\mathcal{R}(q)}
\newcommand{\rr}[0]{r}
\newcommand{\rel}[0]{\rho}
\newcommand{\pt}[0]{\phantom{^\triangledown}}
\newcommand{\shortleftarrowtwo}[0]{\mkern-4mu\shortleftarrow\mkern-4mu}
\author{Norman Knyazev}
\affiliation{%
	\institution{Radboud University}
	\city{Nijmegen}
	\country{The Netherlands}
}
\email{norman.knyazev@ru.nl}
\author{Harrie Oosterhuis}
\affiliation{%
	\institution{Radboud University}
	\city{Nijmegen}
	\country{The Netherlands}
}
\email{harrie.oosterhuis@ru.nl}
\title{Learning to Rank with Variable Result Presentation Lengths}
\begin{document}

\begin{CCSXML}
<ccs2012>
   <concept>
       <concept_id>10002951.10003317.10003338.10003343</concept_id>
       <concept_desc>Information systems~Learning to rank</concept_desc>
       <concept_significance>500</concept_significance>
       </concept>
   <concept>
       <concept_id>10002951.10003317.10003359.10011699</concept_id>
       <concept_desc>Information systems~Presentation of retrieval results</concept_desc>
       <concept_significance>500</concept_significance>
       </concept>
 </ccs2012>
\end{CCSXML}

\ccsdesc[500]{Information systems~Learning to rank}
\ccsdesc[500]{Information systems~Presentation of retrieval results}
\keywords{Learning to Rank, Variable Length Presentations, Policy Learning}

\begin{abstract}
\Ac{LTR} methods generally assume that each document in a top-$K$ ranking is presented in an equal format.
However, previous work has shown that users' perceptions of relevance can be changed by varying presentations, i.e., allocating more vertical space to some documents to provide additional textual or image information.
Furthermore, presentation length can also redirect attention, as users are more likely to notice longer presentations when scrolling through results.
Deciding on the document presentation lengths in a fixed vertical space ranking is an important problem that has not been addressed by existing \ac{LTR} methods.

We address this gap by introducing the variable presentation length ranking task, where simultaneously the ordering of documents and their presentation length is decided. 
Despite being a generalization of standard ranking, we show that this setting brings significant new challenges:
Firstly, the probability ranking principle no longer applies to this setting, and secondly, the problem cannot be divided into separate ordering and length selection tasks.

We therefore propose VLPL -- a new family of Plackett-Luce list-wise gradient estimation methods for the joint optimization of document ordering and lengths.
Our semi-synthetic experiments show that VLPL can effectively balance the expected exposure and attractiveness of all documents, achieving the best performance across different ranking settings.
Furthermore, we observe that even simple length-aware methods can achieve significant performance improvements over fixed-length models.
Altogether, our theoretical and empirical results highlight the importance and difficulties of combining document presentation with \ac{LTR}.
\end{abstract}

\maketitle
\acresetall

\section{Introduction}

Ranking models generally aim to identify the most relevant documents from a set of candidates and present them to the user in the order of decreasing relevance on a \ac{SERP}~\citep{liu2009learning}.
However, the presentation of individual documents within a \ac{SERP} is rarely considered when choosing the document ordering, with the documents usually simply modeled as fixed units of information or relevance.
Nevertheless, in many cases the platforms can choose how each document is presented to the user, which in turn affects how the user perceives and interacts with those documents. 
We call this setting where both the order of the documents as well as their presentation lengths are chosen as the variable document presentation length setting -- see Figure~\ref{fig:variable_length_ranking_example} for an example.
This setting is characterized by the trade-off between presenting more information for displayed documents -- potentially making them more attractive -- or displaying more documents on the \ac{SERP} -- potentially including more relevant results.

\begin{figure}[tb]
\centering
    \includegraphics[width=\columnwidth]{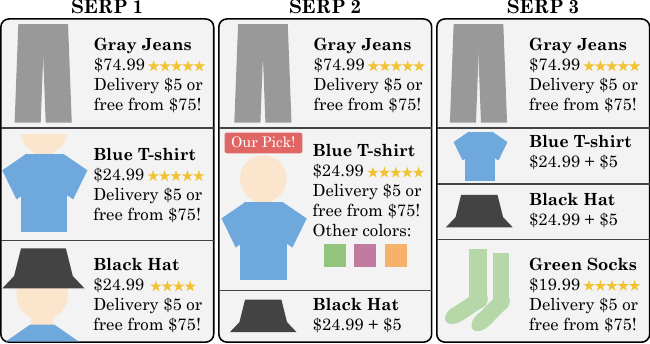}
\vspace{-7.3mm}
\caption{Rankings with variable presentation length. Compared to SERP 1, SERP 2 presents the second item with more vertical space (length), allowing more information to be displayed.
In SERP 3, the second and third item are both presented with less vertical space, resulting in less information for each, but allowing an additional item to be displayed.
}%
\label{fig:variable_length_ranking_example}
\vspace{-1.4\baselineskip}
\end{figure}

Earlier research has shown that increasing the size of individual \ac{SERP} results to include additional textual information may make the user more likely to click through to the full document \citep{maxwell2017Study}, 
whilst shorter results may discourage the users from clicking \citep{kohavi2015Online}. 
Similarly, in e-commerce, adding images and product details has also been found to increase the engagement with those items \citep{bland2005Determinants,you2023What}.

Making a search result larger may reduce the probability that a user will miss it when browsing \citep{cutrell2007What}, and thereby, alleviate some position bias on that result~\citep{marcos2015Effect}.
However, as noted by \citeauthor{pathak2024Influence}~\citep{pathak2024Ranking,pathak2024Influence}, increasing the vertical size of some results generally means that fewer results fit on the \ac{SERP}~\citep{kohavi2015Online}.
In general, increasing the presentation size of an individual retrieved document also means that the documents below it are pushed down by the same amount. 
This, in turn, may make them less likely to be observed by the user and lead to an increase in the effects of position bias~\citep{kelly2015How,kim2016Pagination,craswell2008Experimental}.
Hence this setting requires a balancing of document ordering and their presentation lengths. 
To the best of our knowledge, no existing method can simultaneously optimize a ranking and variable presentation lengths.%

In this work, we formally introduce and study the variable document presentation length setting.
Our analysis reveals challenges that are not present in standard ranking:
most importantly, the \ac{PRP} does not apply, and as a result, the optimal placement strategy cannot be separated into independent ranking and length-selection strategies.
In order to tackle this problem, we extend the \ac{PL} ranking approach to optimize both the order and presentation lengths of documents.
Our novel model, the \ac{VLPL} distribution $\pi_{\text{VLPL}}$, represents each choice as a document placement paired with a specific length.
We build on the PL-Rank family of gradient estimators \citep{oosterhuis2021computationally,oosterhuis2022LearningtoRank} to introduce the novel VLPL$-1$ and VLPL$-2$ estimators.
These efficiently estimate the gradient of $\pi_{\text{VLPL}}$, with VLPL$-2$ having a higher sample-efficiency.
Moreover, as we show that the gradient estimation can be leveraged during model training as either in-processing or post-processing \citep{caton2024Fairness}, in total we propose four distinct methods for jointly optimizing ordering and length.
Our semi-synthetic experiments demonstrate that our approaches achieve a large and consistent improvement over baseline methods.
Overall, our findings underscore the need for ranking optimization methods with a more holistic view of result presentation that goes beyond the ``\emph{ten blue links}", as some links may look very different from others, and perhaps there should not be ten.

\section{Related Work}
\label{sec:related_work}

\Ac{LTR} models aim to identify the most relevant documents and present them to the user, most commonly in the form of a ranked list.
This is usually done by scoring each document and then sorting the documents based on their scores.
According to the \ac{PRP}~\citep{robertson1977Probability}, a scoring function should reflect the probability of each document being relevant to the user to produce optimal rankings.
In recent years, the Plackett-Luce family of list-wise methods \citep{plackett1975analysis} have gained popularity, owing to their simplicity, robustness and interpretability \citep{oosterhuis2021computationally,oosterhuis2022LearningtoRank,buchholz2022Lowvariancea}.
Importantly, however, these and other widely used ranking models \citep{fuhr1989Optimum,joachims2002optimizing,burges2010ranknet,wang2018lambdaloss,cao2007learning, xia2008listwise,oosterhuis2021computationally,oosterhuis2022LearningtoRank,buchholz2022Lowvariancea} only view ranking from the lens of document ordering, not concerning themselves with any further choices about document presentation.

A large body of work has shown that result presentation may significantly affect how users perceive and interact with each document.
\citet{maxwell2017Study} observed that users are more likely to interact with a result on a \ac{SERP} if it is presented with a longer snippet.
This is also consistent with the observations of \citet{kohavi2015Online}, who found that 
allocating the same amount of screen space to fewer but larger and more interactive ads led to significant increases in user engagement and conversions.
\citet{marcos2015Effect} reported that lower-ranking results with larger and more interactive snippets are noticed earlier and for longer compared to other snippets, with snippets containing images being particularly salient.
\citet{you2023What} also found that the image of a \ac{SERP} result may have a drastic impact on how likely the user is to find the document attractive~\citep{aiello2016Role,chuklin2015click,yue2010Position}. 

On the other hand, the number of documents visible to the user at a given time also significantly affects user interactions \citep{oosterhuis2020topkrankings,kim2016Pagination,joachims2002optimizing,Schnabel2016,kelly2015How,craswell2008Experimental,fang2019intervention,sweeney2006Effective}.
\citet{kelly2015How} showed that having more results on each \ac{SERP} leads to users interacting with more documents, 
whilst \citet{kim2016Pagination} found that the need to scroll, in contrast with pagination, significantly reduces the speed at which the users are able to browse the result list.
Taken together, the above findings suggest that decreasing snippet length, which allows to include more results on a single \ac{SERP}, may lead to a higher number of interactions with the full ranking.
Furthermore, the above results are also consistent with the findings of \citet{joachims2002optimizing} on position bias~\citep{craswell2008Experimental,cutrell2007What,lorigo2008Eye}, who saw that the probability that the user observes a document decreases with the position of that document, with particularly significant drops after the end of each page.

Our work is most related to that of~\citet{mao2024Wholea} and \citeauthor{pathak2024Influence}~\citep{pathak2024Ranking,pathak2024Influence}. \citeauthor{mao2024Wholea} aim to learn the causal graph based on the features of the \ac{SERP}, which also includes document rank and document height. This causal graph is then used to correct for the impact of those variables on the click data in order to learn the intrinsic document relevance.
However, the authors assume a fixed length for each document and do not learn how to the document lengths interact across different documents.
\citet{pathak2024Ranking}, on the other hand, focused on the setting where snippets can either contain the headline, summary, image, or any combination of the above, with each element taking up additional space on the screen.
They find that changing the presentation of a document may also change where it should be placed in the ranking, particularly due to the increase in time the users spend examining the longer presentations.
Nevertheless, the authors focus on the setting where the document lengths are chosen prior to determining the document order, whilst in Section~\ref{sub:PRP} we prove that this can lead to suboptimal solutions.
Finally, their approach assumes that user's examination behavior can be measured separately for each document, and does not provide a ranking function for previously unseen queries.

\section{Background: The Standard Ranking Setting}

The core \ac{LTR} task is to find a ranking $y = [d_1, d_2, \ldots, d_{K}]$ of length $K$ that orders documents by decreasing relevance $\rel_i$ for a given query $q$ and document set $d\in D$. %
To assess the quality of a ranking, per-position discounting factors $\theta(i)$ are commonly used, e.g., one-over-rank and DCG weighting.
Stochastic ranking models \citep{oosterhuis2021computationally,ustimenko2020StochasticRank} can optimize a policy $\pi$ that provides a probability distribution over rankings to maximize an objective of the form:
\begin{equation}
  \label{eq:standard_objective}
  \mathcal{R}_{\text{LTR}}(q) = \E{\pi(y)}\Bigg[
    \sum_{i=1}^{K} \theta(i) \rel(d_i)
  \Bigg] = \sum_{y} \pi(y) \sum_{i=1}^{K} \theta(i) \rel(d_i).
\end{equation}
In click modeling \citep{chuklin2015click}, the reward $\rel$ may often represent the perceived relevance or \emph{attractiveness} of the document, i.e., the probability the user will click on the document when exposed to it.

\noindent
The policy of a Plackett-Luce \ac{LTR} model \citep{plackett1975analysis, oosterhuis2021computationally} is defined as:
\begin{equation}
  \label{eq:PL}
   \pi_{\text{PL}}(y) = \prod_{i=1}^{K}\pi_{\text{PL}}(y_i \mid y_{1:i-1}) = \prod_{i=1}^{K} \frac{
    e^{m(d_i)} \mathds{1}[\substack{d_i \not\in y_{1:i-1}}]
  }{
    \sum_{d'=1}^{|D|}
    e^{m(d')} \mathds{1}[\substack{d' \not\in y_{1:i-1}}]},
\end{equation}
where $m(d)$ is the score for document $d$ under some scoring function $m$.
In the Plackett-Luce distribution, documents are sequentially sampled from a softmax distribution without replacement. %
Furthermore, the sampling probabilities in each position only depend on the scores of unplaced documents.
Whilst the optimization of $\pi_{\text{PL}}$ policies can be done with standard policy gradient methods (i.e., via REINFORCE~\citep{williams1992Simple}), \citet{oosterhuis2021computationally, oosterhuis2022LearningtoRank} has proposed the PL-Rank family of methods, specialized to efficiently estimate gradients of Plackett-Luce distributions for \ac{LTR} optimization.

Nevertheless, the standard ranking setting is not applicable if documents can be presented with different lengths.
In particular, this is mainly due to two limitations of its objective (Equation~\ref{eq:standard_objective}):
\begin{enumerate*}
\item the rank weights $\theta(i)$ only consider the ranks of documents and not their presentation lengths; and
\item a ranking always consists of $K$ placed documents.
\end{enumerate*}
However, as discussed in Section~\ref{sec:related_work}, the length of a document can affect the attractiveness and the amount of attention it receives from users~\citep{marcos2015Effect,maxwell2017Study}.
Therefore, $\theta(i)$ should adapt according to the attention that is expected from users~\citep{jeunen2024Normalised} and, similarly, attractiveness $\rel(d_i)$ should also change with length.
Furthermore, changing document presentation length should affect the number of documents that can be presented on a single \ac{SERP}~\citep{pathak2024Ranking}.
None of these aspects are captured by the standard ranking setting.

\section{Variable Presentation Length Problem Setting}
\label{sec:problem_setting}

This section introduces a formalization of the variable presentation length ranking setting.
Subsequently, we also discuss some of its theoretical implications.

\subsection{Slot based ranking}

To capture the unique characteristics of this setting, 
we represent each placement as a choice of document $d$ and its length $l$, with a full ranking being $y = [(d_1, l_1),(d_2, l_2), \ldots, (d_{|y|}, l_{|y|})]$, where each $d$ can only be placed once.
Crucially, we assume that each ranking consists of $K$ positions or slots: $\text{len}(y) = \sum_{j=1}^{|y|} l_i = K$.
However, unlike the standard setting, each position now refers to a physical space in the ranking, not a moment of placement.
The lengths $l$ thus represent the number of slots each document occupies in the ranking.
We also denote the first slot $d_i$ occupies as $s_i = 1 + \sum_{j=1}^{i-1} l_i$.
The above changes then allow us to define a new optimization objective called the \ac{EA} of a ranking:
{\fontsize{8.3pt}{10.14pt}
\begin{align}
  \label{eq:opt_objective}
  \Rq = \E{y}\Bigg[
    \sum_{i=1}^{|y|} \theta({s_i},{l_i}) \rel({d_i},{l_i})
  \Bigg]
  = \!\!
  \sum_{y} \pi(y) \! \sum_{i=1}^{|y|} \theta({s_i},{l_i}) \rel({d_i}, {l_i}).
\end{align}}

An important consequence of the above differences with standard ranking is that the number of unique documents $|y|$ in a ranking can be lower than $K$. 
Our task is then to choose the document-length pairs that maximize the weighted attractiveness in Equation~\ref{eq:opt_objective} without exceeding the slot budget $K$ or duplicate document placements.
In contrast with standard ranking, both the position weight $\theta$ and the reward $\rel$ depend on the presentation length $l$.
For this work, we assume that $\rel$ is fixed for each $(d,l)$, i.e., for document $d$ each presentation length is associated with a single relevance value, which is the expected \ac{CTR} of $d$ when exposed to the user.
Finally, to make our analysis and approach tractable, we assume a maximum document length of $L$ slots.
Naturally, document lengths may in reality be adjustable even up to a single pixel or may instead be fixed (but distinct) for each document.
We consider both of these important next steps to be addressed by future work.

Finally, we note that our proposed setting has some overlap with that of \citet{pathak2024Ranking}: they also capture that varying document presentation lengths can lead to different number of items appearing above the fold. 
See Section~\ref{sec:related_work} for a further discussion.

\subsection{Theoretical analysis on optimal rankings}
\label{sub:PRP}

\begin{table*}[]
\caption{The impact of discount weights on optimal ordering. For three documents $A$, $B$, $C$ with respective relevances $1.0$, $0.6$ and $0$ (left, same for all lengths $l$), different choices of discount weights $\theta_1$ (center left) and $\theta_2$ (center right) with $K=L=3$ lead to different expected rewards for different rankings (right). $BAA$ represents ranking $y=[(B,1), (A,2)]$, whilst $AB(C)$ is equivalent to both $AB$ and $ABC$. Best ranking per $\theta$ is underlined. 
\vspace{-3.3mm}
}
\label{tab:example}
  \addtolength{\tabcolsep}{-0.2em}\hspace{3mm}\begin{tabular}{cccc}
    $d$ & $\rel(d)$\\
    \toprule
    A & 1.0\\
    B & 0.6\\
    C & 0.0 \\
    \bottomrule
   \end{tabular}
   \,
    \begin{tabular}{cccc}
    $\theta_1$ & l=1 & l=2 & l=3 \\
    \toprule
    s=1 & 0.500 & 0.667 & 0.750 \\
    s=2 & 0.333 & 0.500 & - \\
    s=3 & 0.250 & - & - \\
    \bottomrule
   \end{tabular}
   \,
    \begin{tabular}{cccc}
    $\theta_2$ & l=1 & l=2 & l=3 \\
    \toprule
    s=1 & 0.631 & 0.815 & 0.895 \\
    s=2 & 0.500 & 0.715 & - \\
    s=3 & 0.431 & - & - \\
    \bottomrule
   \end{tabular}
   \,
  \begin{tabular}{cccccccccc}
    & AAA & AAB & ABB & BAA & BBA & BBB & AA & AB(C) & BA\\
    \toprule
    $\theta_1$ & 0.750 & \underline{0.817} & 0.800 & 0.800 & 0.650 & 0.450 & 0.667 & 0.700 & 0.633\\
    $\theta_2$ & 0.895 & 1.074 & 1.060 & \underline{1.094} & 0.920 & 0.537 & 0.815 & 0.931 & 0.879\\
    \bottomrule
   \end{tabular}
  \vspace{-0.8\baselineskip}
\end{table*} 
A key aspect of our problem setting is the trade-off between extending document lengths and increasing the number of documents on the SERP.
Importantly, this trade-off is present even when the attractiveness of a document is independent of its length.
As illustrated in Table~\ref{tab:example}, in a simple three-slot ranking with documents $A$, $B$ and $C$, both rankings $[(A, 2), (B, 1)]$ and $[(B, 1), (A, 2)]$ are optimal in terms of \ac{EA} under different discount weights $\theta$.
Notably, in this example, under $\theta_2$ extending the second item into the third slot makes the user slightly more likely to observe it compared to the first item, and thus, placing the more relevant item second leads to a higher value of the \ac{EA} metric.
However, the observation that under $\theta_2$ the most attractive document $A$ should be placed second, contradicts the \ac{PRP}~\citep{robertson1977Probability}, which states that placements should be in decreasing order of relevance.
We thus conclude that \ac{PRP} does not hold in the variable document presentation length setting.

Moreover, first choosing document lengths to maximize the exposure can similarly lead to suboptimal results.  
Whilst in the above example the optimal length for $A$ is $2$ under both values of $\theta$, had document $B$ not been available (i.e., only options are $A$ and $C$), the optimal strategy in both cases would have instead been to present document $A$ at length $3$.
Therefore, the above example proves that the task of choosing document length cannot be separated into separate ranking and document-specific length selection tasks:
\begin{theorem}
Let $\pi^*(l, r\mid d, D)$ be the optimal policy for jointly choosing $r$ the rank of document $d$ and $l$ its presentation length.
The optimal policy cannot be separated into a separate ranking policy and length selection policy:
{\fontsize{8.3pt}{10.14pt}
\begin{equation}
\forall D, \not\exists (\pi_{r},   \pi_{l}), \forall (d,l,r): \;  \pi^*(l, r\mid d, D) = \pi_{r}(r\mid d, D) \pi_{l}(l \mid d).
\end{equation}}
\end{theorem}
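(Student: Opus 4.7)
My plan is to prove the theorem by contradiction via a concrete counterexample that leverages the setting already laid out in Table~\ref{tab:example}. The crucial observation is that the factor $\pi_l(l \mid d)$ depends only on the identity of the document and not on the candidate set $D$, so if the decomposition held then marginalizing over the rank $r$ would force $\pi^*(l \mid d, D) = \pi_l(l \mid d)$ to be constant in $D$ for every fixed document $d$. The strategy is therefore to exhibit two candidate sets $D_1, D_2$ both containing a common document $d$ for which the optimal policy assigns $d$ distinct lengths, contradicting this constancy.

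Concretely, I would reuse the three documents from Table~\ref{tab:example} under the discount weights $\theta_2$. With $D_1 = \{A, B, C\}$ the unique \ac{EA}-maximizing ranking is $[(B,1),(A,2)]$ (already tabulated and underlined), so the optimal policy deterministically assigns length $2$ to document $A$. With $D_2 = \{A, C\}$ I would enumerate the handful of feasible rankings that fill $K=3$ slots, namely $[(A,3)]$, $[(A,2),(C,1)]$, $[(A,1),(C,2)]$, $[(C,1),(A,2)]$, $[(C,2),(A,1)]$ and $[(C,3)]$, and use $\rel(C)=0$ together with the tabulated values of $\theta_2$ to show that $[(A,3)]$ uniquely maximizes \ac{EA}. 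Thus the optimal policy deterministically assigns length $3$ to document $A$ under $D_2$, so $\pi^*(l\mid A, D_1)$ is a point mass at $l=2$ while $\pi^*(l\mid A, D_2)$ is a point mass at $l=3$, which no $D$-independent $\pi_l(l\mid A)$ can reproduce simultaneously.

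The main obstacle is not technical difficulty but rather being careful about what ``the optimal policy'' means when the maximizer of \ac{EA} is not unique: in the counterexample both $D_1$ and $D_2$ admit unique optima, so the argument goes through cleanly, but the write-up should note that even if one admits stochastic optima the supports $\{2\}$ and $\{3\}$ of the two marginals are disjoint, so no single length distribution $\pi_l(l\mid A)$ can be consistent with both. A brief intuition is also worth stating: enlarging $D$ with a second moderately relevant document $B$ effectively ``steals'' a slot from $A$ and shrinks its optimal length, a dependence on $D$ that is by construction invisible to any length policy conditioned only on $d$.
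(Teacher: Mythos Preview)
Your proposal is correct and follows essentially the same approach as the paper: both exhibit two instances in which document $A$'s optimal length differs, contradicting any $D$-independent $\pi_l(l\mid d)$. The only cosmetic difference is that you remove $B$ from the candidate set to form $D_2=\{A,C\}$, whereas the paper keeps $B$ but sets $\rel(B)=0$; since $\rel(C)=0$ already, both modifications make $[(A,3)]$ the unique optimum and the argument is identical in substance.
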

\begin{proof}
Assume documents and position weights as in Table~\ref{tab:example}.
The optimal length policy for document $A$ is then $\pi_{l}(l=2 \mid A)=1$.
If we only change the relevance of $B$ to $0$, the optimal length policy for $A$ is $\pi_{l}(l=3 \mid A)=1$, thus there is no universal optimal $\pi_{l}$.
\end{proof}

At first glance, it may seem feasible to solve the task with dynamic programming, but a further difficulty is that the optimal ranking for $K$ slots is not necessarily an optimal sub-ranking for $K' > K$ slots.
This can be also seen in Table \ref{tab:example}: $B$ should be placed 
ahead of $A$ under $\theta_2$ and $K=3$, but the order is reversed under $K=2$ (i.e., $[(A,1), (B,1)]$).
Nevertheless, optimal rankings should vary document lengths due to the changes in exposure (and attractiveness) arising in this setting.
For instance, we can see that under $K=3$ the optimal single-length solutions $[(A, 1), (B, 1), (C,1)]$, $[(A, 2)]$ and $[(A, 3)]$ are worse than the best mixed document length solution.
Finally, we note that, whilst not shown in the above example, the optimal orderings may again be different for other values of $\rho$ that maintain the relative order $A\succ B\succ C$ but not the ratio of their $\rel$ values, e.g., if $\rho(B)=0.01$ then $(A,3)$ would always be preferred.

Altogether, whilst it appears crucial to present documents at different lengths and choose their ordering accordingly, the above examples reveal that there appears no straightforward optimal solution -- moreover, we proved that this task cannot be broken down into separate ranking and length-selection policies.
Accordingly, the remainder of this work proposes a framework for jointly optimizing document ordering and lengths.

\section{Method: Variable Length Plackett-Luce}
\label{sec:methodology}

\subsection{A distribution over rankings and lengths}
\label{subsec:vpl}

We propose to build a novel model on top of the existing \ac{PL} model for ranking (Equation~\ref{eq:PL}); whereas the original only models a distribution over rankings, our novel model captures a distribution over both rankings and presentation lengths.

Our novel model also follows sequential sampling of document placements, but instead of sampling single documents $d$, document-length pairs $(d,l)$ are sampled, where $l$ is the presentation length of $d$.
To avoid a document from appearing multiple times in the same ranking, the sampling of a $(d,l)$ pair is followed by the removal of all other pairs that involve $d$.
For our notation, we use the binary indicator variable $\mathds{1}[d_i \not\in y_{1:i-1}] = \mathds{1}[\not\exists l : (d_i, l) \in y_{1:i-1}]$ to indicate earlier placements of a document.
Another difference with standard ranking is that the lengths of placed documents affect how many can be displayed in the $K$ slots: specifically, a pair $(d,l)$ should not be sampled if fewer than $l$ slots are unfilled at the end of a partial ranking.
Thus, we also introduce the indicator $\tau(s_i + l_i) = \mathds{1}[s_i + l_i \leq K + 1] = \mathds{1}[\text{len}(y_{1:i-1}) +l_i \leq K]$ to indicate if there are enough slots left for the placement of a document-length pair.

This notation then lets us define our novel \emph{\ac{VLPL} distribution}:
\begin{align}
  \label{eq:vpl}
  \pi(y) &= \prod_{i=1}^{|y|}\pi(y_i \mid y_{1:i-1}) = \prod_{i=1}^{|y|}\pi(d_i, l_i \mid y_{1:i-1}),\\
  \label{eq:vpl2}
  \pi(y_i \mid y_{1:i-1})  &= \frac{
    e^{m(d_i, l_i)} \one{\substack{d_i \not\in y_{1:i-1}}} \tau(s_i + l_i)
  }{
    \sum_{d'=1}^{|D|}\sum_{l'=1}^{L}
    e^{m(d', l')} \one{\substack{d' \not\in y_{1:i-1}}} \tau(s_i + l')}.
\end{align}
Analogous to the standard \ac{PL} ranking model (Equation~\ref{eq:PL}), the placement probability $\pi(d, l \mid y_{1:i-1})$ only depends on the score $m(d,l)$ and the scores of other eligible unplaced $(d'\!,l')$ pairs.
However, our distribution differs in what pairs are considered eligible, as placing one pair can make several other pairs ineligible to avoid duplicate placements and space constraints.
This goes beyond the \emph{sampling without replacement} of the standard \ac{PL} model, which can also be seen a special case $L=1$ of our novel variable length model.

\subsection{Variable length Plackett-Luce optimization}

Previous work on Plackett-Luce optimization has found efficient techniques for the standard Plackett-Luce model~\citep{oosterhuis2021computationally, oosterhuis2022LearningtoRank}.
Fortunately, this existing approach can be re-used for our novel variable length model.
In particular, we can apply the derivation steps corresponding to Equations $11$-$23$ of \citet{oosterhuis2021computationally}, with only minor changes to match our model and objective.
For brevity and due to its high similarity, we do not repeat the full derivation here, but report the following resulting gradient:
\begin{align}
  \label{eq:pi2}
  &\dm\Rq 
  = \sum_{d,l=1}^{|D|,L} \Big[\dm m(d,l)\Big] \E{y} \Bigg[
    \underbrace{\sum_{i=\rr(d,l)+1}^{|y|} \theta(s_i,l_i) \rel(d_i, l_i)}_{\text{future reward after placement}} \\
      &+ \underbrace{\sum_{i=1}^{\rr(d)} \pi(d,l\mid y_{1:i-1}) \Big[
        \theta(s_i,l) \rel(d, l) - \sum_{x=i}^{|y|} \theta(s_x,l_x) \rel(d_x, l_x)
        \Big]}_{\text{expected reward minus the risk of placement}}
  \Bigg]
\end{align}
Importantly, analogous to \citet{oosterhuis2021computationally}, the gradient of $\Rq$ w.r.t. the score of a single document-length pair can be broken down into three terms: 
\begin{enumerate*}[label=(\roman*)]
\item the future reward observed following the placement, 
\item plus the expected direct reward of the placement, 
\item minus the expected risk imposed by the pair before it is placed.
\end{enumerate*}

Unsurprisingly, compared to PL-Rank~\citep{oosterhuis2021computationally}, the gradient in Equation~\ref{eq:pi2} is calculated for each $(d,l)$ pair, with the reward $\rel$ and weights $\theta$ now depending on the observed document lengths.
However, a subtle but crucial difference is that we also differentiate between $\rr(d,l)$, which 
is the rank at which $d$ is observed at length $l$, and $\rr(d)$, which is the rank of $d$ at any length (the rank function is set to $|y|$ if its input is not observed).
The distinction between $\rr(d)$ and $\rr(d,l)$ is important as, due to the ranking size limit $K$, placing the document $d$ at distinct lengths may change which $(d'\!,l')$ can follow it.
Therefore, the future reward term for each length $l$ of $d$ is thus included only when that particular pair $(d,l)$ is observed.
The risk term and the expected reward, however, accumulate up until $d$ has been shown at any length, proportional to $\pi(d, l \mid y_{1:i-1})$, after which all lengths of $d$ become ineligible for placement.
If a presentation length $l$ is too long, these terms are $0$, as $\tau(s_i+l) = 0 \rightarrow \pi(d, l \mid y_{1:i-1})=0$.

The gradient formulation in Equation~\ref{eq:pi2} can be turned into an estimator by replacing the expectation with an average over sampled rankings.
For brevity, we omit the full equation but name this approach VLPL-1.
To our knowledge, VLPL-1 is the first method to jointly optimize rankings and lengths in the variable document presentation length setting.

The remainder of this section describes how to efficiently sample rankings from the VLPL distribution (Section~\ref{sub:sampling_rankings}).
Subsequently, in Section~\ref{sub:VLPL2},  we propose a more sample-efficient estimator  through a smarter sampling technique for re-using samples.
Finally, we discuss how VLPL can be applied in an in-processing and post-processing manner (Section~\ref{sub:VLPLapplications}).

\subsection{Efficiently sampling rankings}
\label{sub:sampling_rankings}
Whilst it may appear that variable length rankings are substantially more difficult to sample than standard rankings~\citep{oosterhuis2021computationally}, this can actually be done with a comparable computational complexity.
The trick is to first sample an invalid ranking $y'$ and then transform it with a function $f$ such that $f(y')$ is a valid variable length ranking.
We start by defining $y'$, which, in contrast with the variable length rankings as defined before, places $d$ once at every possible length $l$, whilst ignoring the maximum ranking length $K$.
Still, $y'$ does not allow the placement of duplicate documents of the same length.
This gives us the following probability for document placements:
\begin{equation}
\pi'(d, l \mid y_{1:i-1}') = \frac{
e^{m(d, l)} \one{(d,l) \not\in  y_{1:i-1}' }
}{
\sum_{d'=1}^{|D|}\sum_{l'=1}^{L} e^{m(d', l')} \one{(d',l') \not\in  y_{1:i-1}'}
}.
\end{equation}
Let $d_i'$ and $l_i'$ indicate the document at position $i$ in $y'$ and its length respectively.
The probability of the ranking $y'$ is then simply the product of each placement probability:
\begin{equation}
\pi'(y') = \prod^{|D| L}_{i=1}p(d_i', l_i' \mid y_{1:i-1}').
\end{equation}
We note that every document-length pair is placed in $y'$ and thus $|y| = DL$.
Obviously, this makes $y'$ an invalid variable length ranking as each document will appear $L$ times in the ranking, and thus it is guaranteed to exceed the maximum length of $K$.

Our next step is to use a function $f^K$ that turns $y'$ into a valid ranking of length $K$, such that we end up with the same ranking distribution: $\pi(f(y') = y) = \pi(y)$.
Our choice of $f$ is a function that sequentially takes the first document in $y'$ that does not violate either of the variable ranking constraints, until length $K$ is reached:
\begin{equation}
\begin{split}
f(y')_i = y_j' \; &\text{where }
j = \arg\min_{j'} j' \in \{1,2,\ldots,DL\}
\\ 
&\text{s.t. } d_{j'}' \not\in f(y')_{1:i-1} \land \text{len}(f(y')_{1:i-1}) + l_{j'}' \leq K.
\end{split}
\end{equation}
In other words, $f(y')_1 = y_1'$, then $f(y')_2$ is the next document that would not violate the validity of the ranking when placed, and so forth.
To show that this yields the same distribution as $\pi$, 
we first note that the probability of a document placement: $\pi(f(y')_i = d \mid f(y')_{1:i-1})$ is the probability of it being a valid option and being ranked in front of all other valid options that are still available after $f(y')_{1:i-1}$.
Following Luce's choice axiom~\citep{luce1977Choice}, the ordering probabilities of the existing options in a \ac{PL} model are not changed under the introduction of additional alternatives, i.e., $\text{Pr}(a \succ b \mid \{a,b\}) = \text{Pr}(a \succ b \mid \{a,b,c\})$.
Thus, the availability of invalid options in $y'$ does not change the relative ranking of valid options, and therefore $\pi(f(y')_i = (d,l) \mid f(y')_{1:i-1}) = \pi(y_i = (d,l) \mid y_{1:i-1})$ when $f(y')_{1:i-1} = y_{1:i-1}$.
By extension, $\pi(f(y') = y) = \pi(y)$.

\subsection{Extending VLPL for sample-efficiency}
\label{sub:VLPL2}

As previously suggested, VLPL-1 may have suboptimal sample-efficiency when calculating the future reward.
In order to address this, we first start by 
defining the total placement reward as:
\begin{align}
  g(y, i) &= \sum_{x=i}^{|y|} \theta(s_x,l_x) \rel(d_x, l_x).
\end{align}
This allows us to re-formulate Equation~\ref{eq:pi2} as:
\begin{align}
  \label{eq:gh_g}
  g(y, d, l) &= g(y, \rr(d,l)+1),\\
  \label{eq:gh_h}
  h(y, d, l) \hspace{-0.25mm}&= \hspace{-1mm}\sum_{i=1}^{\rr(d)} \hspace{-0.25mm}\pi(d,l\mid y_{1:i-1}) \big[\theta(s_i,l) \rel(d, l) \hspace{-0.25mm}-\hspace{-0.25mm} g(y, i)\big].
\end{align}
The expected future reward with our sampling scheme is then:
{\fontsize{8.5pt}{10.4pt}
\begin{equation}
\begin{split}
  \E{} \Bigg[
    \sum_{i=\rr(d,l)+1}^{|y|} \hspace{-3mm}\theta(s_i,l_i) \rel(d_i, l_i)
  \Bigg] \hspace{-0.25mm}=\hspace{-0.25mm}  \E{} \big[
    g(y, d, l)
  \big] \hspace{-0.25mm}=\hspace{-0.25mm}  \E{} \big[
    g(f(y'),d, l)
  \big]
  .
\end{split}
\end{equation}}
To increase the sample-efficiency, we want to reuse each $y'$.
For this purpose, we introduce the following function:
$f(y'\shortleftarrowtwo d,l)$, which is the same transformation as $f$, except when $d$ is being selected its length is set to $l$ (instead of the actually sampled length).
Furthermore, we introduce the variable $l(d, y')$, denoting the length of the first instance of $d$ that is included in $f(y')$.
Our aim is then to use samples arising from the sampling distribution of $l(d,y')$ to estimate the expected value under $l(d, y') = l$.
We therefore introduce the following importance weight:
{\fontsize{8.5pt}{10.4pt}
\begin{align}
\label{eq:importance_weight}
p(d, l, y') \hspace{-0.01cm}&=\hspace{-0.01cm} \dfrac{p\big(l(d, y') = l\big)}{\sum_{l'=1}^L p\big(l(d, y')=l'\big)}\hspace{-0.01cm}=\hspace{-0.01cm}\frac{
e^{m(d,l)}\tau'(d, l, y')}{
\sum_{l'=1}^L e^{m(d,l')}\tau'(d, l', y')
},
\end{align}}
where $\tau'(d, l, y') = \mathds{1}[{\text{len}(f(y')_{1:\rr(d\mid f(y'))-1}) + l \leq K}]$ is true only for the lengths at which $d$ can be placed when it is observed in $f(y')$.
This then enables the following reformulation:
\begin{equation}
\E{y'} \big[
    g(f(y'), d, l)
  \big]
=
\E{y'} \big[p(d, l, y')
    g(f(y' \shortleftarrowtwo d,l), d, l)
  \big],
\end{equation}
That is, we can use rankings $y'$ with any observed length $l(d,y')$, process them for the target length $l$ with $f(y\shortleftarrowtwo d, l)$ and reweigh the resulting future reward with $p(d, l, y')$.
Repeating this for all valid lengths at each position yields the gradient estimate:
\begin{equation}
\begin{split}
  \label{eq:pi3_fgh}
  \dm\Rq 
  &= \sum_{d,l=1}^{|D|,k} \Big[\dm m(d,l)\Big] \\
  &\;\, \cdot\E{y'} \big[p(d, l, y')
    g(f(y' \shortleftarrowtwo d,l), d, l) + h(f(y'), d, l)
  \big].
\end{split}
\end{equation}

Furthermore, whilst calculating a separate $f(y' \shortleftarrowtwo d,l)$ for each $(d,l)$ pair may seem time-consuming, in practice it is actually not needed.
We first define $\Delta \in [1-L, 2-L,\ldots,L-1]$ as the difference between the target length $l$ and observed length $l(d,y')$.
Importantly, for a source ranking $y'$ which after applying $f(y')$ includes $(d_i, l_i)$ and $(d_j, l_j)$, the adjusted rankings $f(y'\shortleftarrowtwo d_i, l_i+\Delta)$ and $f(y'\shortleftarrowtwo d_j, l_j+\Delta)$ in fact share the same tail, i.e., $z=\max(i,j)\rightarrow f(y'\shortleftarrowtwo d_i, l_i+\Delta)_{z+1:} = f(y'\shortleftarrowtwo d_j, l_j+\Delta)_{z+1:}$.
We can thus use the same ranking to calculate the future reward for multiple such shifts, respectively reweighting each with $p(d_1, l_1+\Delta, y'), p(d_2, l_2+\Delta, y'), \ldots, p(d_{|f(y')|}, l_{|f(y')|}+\Delta, y')$.
Moreover, these future placements would also be the same if instead we still placed $d_i$ at the original observed length $l_i$ but then reduced the total number of available slots in the ranking by $\Delta$, i.e. $f(y'\shortleftarrowtwo d_i, l_i+\Delta)_{z+1:} = f^{K-\Delta}(y', d_i, l_i)_{z+1:}=f^{K-\Delta}(y')_{z+1:}$, where $f^{K'}(y')$ produces a ranking $y$ of length $K'$ instead of $K$.

As $\Delta$ can take on values between $1-L$ and $L-1$ (corresponding to replacing the longest length with the shortest and vice-versa), the future reward always arises from one of $2L-1$ rankings $f^{K+\Delta}(y')$.
However, whilst $f^K(y' \shortleftarrowtwo d, l+\Delta)$ and $f^{K-\Delta}(y' \shortleftarrowtwo d, l)$ lead to the same placements after $d$, the slot positions $s_i$ in the latter would be shifted down by $\Delta$.
To maintain the consistency in the reward between the two cases and to ensure the correct range of $\theta$'s inputs,
we define the adjusted total reward as:
{\fontsize{8.5pt}{10.4pt}
\begin{equation}
  \label{eq:g_prime}
  g'(y, i, \Delta) = \sum_{x=i}^{|y|}\mathds{1}[s_x+\Delta\geq 1]\theta(s_x+\Delta, l_x)\rel(d_x, l_x).
\end{equation}}

We note that the indicator variable will always be one when calculating the future reward $g'(y, i+1, \Delta)$ for $1\leq l(d_i, y')+\Delta\leq L$.

Finally, as shown by \citet{oosterhuis2022LearningtoRank}, the computation of a gradient estimator for the \ac{PL} distribution can be further accelerated by utilizing helper variables, which are first pre-computed and re-used for all positions.
Along with applying Equation~\ref{eq:g_prime}, we can thus express the terms in Equation~\ref{eq:pi3_fgh} as:
{\fontsize{8.5pt}{10.4pt}
\begin{align}
  \text{PR}^{\Delta}_{y,i} &= g'(f^{K-\Delta}(y), i, \Delta), \quad 
  \text{PR}^{\Delta}_{y,(d)} = \text{PR}^{\Delta}_{y,\rr(d)+1},\\
  \text{RI}_{y,i} &= \sum_{x=1}^{\min(|y|, i)}\frac{\text{PR}^0_{y,i}}{\sum_{d'=1}^{|D|} \sum_{l'=1}^L e^{m(d',l')}\mathds{1}[\substack{d\not\in y_{1:x-1}}]\tau{[s_x+l']}},\\
  \text{DR}_{y,i,l} &= \sum_{x=1}^{\min(|y|, i)}\frac{\theta(s_x, l)}{\sum_{d'=1}^{|D|}\sum_{l'=1}^L e^{m(d',l')}\mathds{1}[\substack{d\not\in y_{1:x-1}}]\tau{[s_x+l']}},\\
  \text{RI}_{y,(d,l)} &= \text{RI}_{y, \min(\rr(d), \rr(l))}, \quad
  \text{DR}_{y,(d,l)} = \text{DR}_{y, \min(\rr(d), \rr(l)), l},
\end{align}}
where $\rr(l)$ is $\sup i\hspace{-0.2mm}:\hspace{-0.2mm} \text{len}(f(y')_{1:i-1})+l\leq K$, i.e., the last position at which a document of length $l$ can be placed, and $\rr(d)$ is the rank of $d$ in $y^0=y$. This, in turn, allows us to reformulate Equation~\ref{eq:pi2} as:
{\fontsize{8.3pt}{9pt}
\begin{equation}
\begin{split}
  \label{eq:pi3_plrank3}
  \dm\Rq = &\sum_{d,l=1}^{|D|,L} \Big[\dm m(d,l)\Big]\E{y'}\Big[
    p(d, l, y')\text{PR}_{y', (d)}^{l-l(d,y')}\\ 
    &\qquad + e^{m(d,l)}\Big(
      \rel(d,l)\text{DR}_{f(y'), (d,l)} - \text{RI}_{f(y'),(d,l)}
    \Big)
  \Big].\!\!\!
\end{split}
\end{equation}}
This gradient can be approximated using the sampling procedure for $\pi'$ described in Subsection~\ref{sub:sampling_rankings}, yielding our estimator VLPL-2:
{\fontsize{8.3pt}{9pt}
\begin{equation}
\begin{split}
  \label{eq:pi3_plrank3}
 \dm\Rq = &\sum_{d,l=1}^{|D|,L} \dm m(d,l)\sum_{i=1}^N \dfrac{1}{N}\Big[
    p(d, l, y'^{(i)})\text{PR}_{y'^{(i)}, (d)}^{l-l(d,y'^{(i)})}\\ 
   & + e^{m(d,l)}\Big(
      \rel(d,l)\text{DR}_{f(y'^{(i)}), (d,l)} - \text{RI}_{f(y'^{(i)}),(d,l)}
    \Big)
  \Big].\!\!\!
\end{split}
\end{equation}}
This estimator has higher sample-efficiency than VLPL-1 as it shares the future reward across different lengths of the same document. 
The risk and the expected direct reward terms RI and DR are still calculated using the standard ranking $y=y^0$ of length $K$.
The exact steps of our approach are also outlined in Algorithm~\ref{alg:one}.

Finally, we note that VLPL-1 can similarly be implemented with PL-Rank-3-like sample-efficiency by using the reward {\fontsize{9pt}{1pt}PR$^0_{y', \rr(d,l)+1}$} in place of {\fontsize{9pt}{1pt}PR$^\Delta_{y', \rr(d)+1}$} and setting $p(d, l, y')$ to $1$.

\begin{algorithm}[hbt!]
\renewcommand{\algorithmicrequire}{\textbf{Input:}}
\caption{VLPL-2 Gradient Estimator}\label{alg:one}
\begin{algorithmic}[1]
\STATE \textbf{Input:} Items: $D$; Scores: $m$; Relevances: $\rel$; Position weights: $\theta$; Slot number: $K$; Item length limit: $L$; Sample number: $N$.
\STATE $\{y'_{(1)}, y'_{(2)}, \ldots, y'_{(N)}\} \gets \text{SampleRankings(m, N, K, L)}$
\STATE $\text{Grad} \gets \mathbf{0} \in {\mathbb{R}^{|D| \times L}}$ 
\STATE $S_1, \ldots, S_L \gets \sum_{d=1}^{|D|} \exp(m(d,1)), \ldots \sum_{d=1}^{|D|}\exp(m(d,L))$
\FOR{$j \in [1,2,\ldots,N]$}
  \STATE \textit{\small // Each $y^x=[(d^x_1, l^x_1),\ldots, (d^x_{|y^x|},  l^x_{|y^x|})], s_i^x=1+\text{len}(y_{1:i-1}^x)$}
  \STATE $y^{-L+1}, \ldots, y^{L-1} \gets f^{\substack{K+L-1}}(y'_{(j)}), \ldots, f^{\substack{K-L+1}}(y'_{(j)})$
  \STATE{$S_1',\ldots,S_L' \gets S_1, \ldots, S_L$}
  \STATE{$\text{PR}^{-L+1}_{K+L}, \ldots, \text{PR}^{L-1}_{K+L} \gets 0, \ldots, 0$}
  \FOR{$i \in [K+L-1, K+L-2, \ldots, 1]$}
    \FOR{$\Delta \in [-L+1, \ldots, L-1]$}
      \STATE{$u = \mathds{1}[|y^\Delta| \geq i]\mathds{1}[s_i +\Delta\geq 1] \rel(d^\Delta_{i}, {l^\Delta_i}) \theta(s^\Delta_i + \Delta, l^{\Delta}_{i})$}
      \STATE{$\text{PR}^\Delta_{i} \gets \text{PR}^\Delta_{i+1} + u$} 
    \ENDFOR
  \ENDFOR
  \STATE{$\text{RI}_0, \text{DR}^1_0, \ldots, \text{DR}^L_0 \gets (0, 0, \ldots, 0)$}
  \FOR{$i \in [1, \ldots, K]$}
    \STATE{$S' = \sum_{l=1}^L{S'_l}$}
    \STATE{$\text{RI}_i \gets \text{RI}_{i-1} + \text{PR}_i^0 / S'$}
    \FOR{$l \in [1, \ldots, L]$}
      \STATE{\text{DR}$^l_i \gets \text{DR}^l_{i-1} + \theta(s^0_i, l) / S'$}
      \STATE{$S'_l\gets \mathds{1}[l \leq K - (s^0_{i} + l^0_{i}) + 1] (S'_l - \exp(m(d^0_i, l)))$}
      \STATE{$p_i^l \gets \exp{(m(d^0_i,l))} \mathds{1}[\substack{l \leq K - s_i^0 + 1}]$}
    \ENDFOR
  \ENDFOR
  \FOR{$d \in [1, \ldots, |D|]$}
    \FOR{$l \in [1, \ldots, L]$}
      \STATE{$r_1 \gets \min(\text{rank}(d, y^0),  K)$}
      \STATE{$r_2 \gets \min(r_1, \sup i: \{l \leq K - s^0_i + 1\})$}
      \STATE{$u \gets  \exp(m(d,l))[\rel(d,l)\cdot \text{DR}_{r_2}^l \hspace{-0.4mm}-\hspace{-0.4mm} \text{RI}_{r_2}]$}
      \STATE{$\text{Grad}(d, l) \gets \text{Grad}(d, l) + p_{r_1}^l / \sum_{l'=1}^{L} p_{r_1}^{l'} \cdot \text{PR}_{r_1+1}^{l-l^0_{r_1}} + u $}
    \ENDFOR
  \ENDFOR
\ENDFOR
\STATE \text{\textbf{return}} $\text{Grad} / N$
\end{algorithmic}
\end{algorithm}

\subsection{Using VLPL as in- or post-processing}
\label{sub:VLPLapplications}

Existing \ac{LTR} work mainly aims to optimize machine learning models for the ranking task~\citep{oosterhuis2021computationally,oosterhuis2020topkrankings,buchholz2022Lowvariancea};
we will refer to this as an \emph{in-processing} application of \ac{LTR}.
However, \ac{LTR} gradients can also be used to optimize a ranking directly, i.e., when $m$ is a look-up table instead of a model.
This also allows for a \emph{post-processing} application of \ac{LTR}, and by extension VLPL, where the reward values $\rel$ are predicted, and subsequently, the ranking is optimized for these predictions.
Post-processing  is not very popular for standard ranking, as the optimal ranking simply sorts according to $\rel$, thus there is little added value, while it adds substantial costs during inference.
However, in our variable document length setting the \ac{PRP} does not hold and, therefore, post-processing could actually result in better rankings than is possible with a standard scoring model.
For similar reasons, post-processing ranking has also been used for ranking fairness \citep{caton2024Fairness}.
In this work, we evaluate VLPL both applied as an in-processing method and a post-processing method.

\section{Experimental Setup}
\label{sec:experimental_setup}

Our aim is to better understand the variable presentation length setting and evaluate our proposed methods for this task.
To begin, as discussed in Section~\ref{sub:PRP}, there is no known method that can provide the optimal ranking and lengths in this setting,
even when all the attractiveness values $\rel$ for the query are available (i.e., the oracle setting).
We are therefore interested in determining whether VLPL can find high reward rankings, both when the $\rel$ values are given or unknown.
Thus our first two research questions are:
\begin{enumerate}[align=left, label={\bf RQ\arabic*},leftmargin=*]
\item When provided with true attractiveness labels, does VLPL learn rankings with a higher \ac{EA} metric than the baselines?
    \label{rq:oracle}
\item Does VLPL also learn higher \ac{EA} rankings when it does not have access to the true attractiveness?
    \label{rq:performance}
\end{enumerate}
Furthermore, as it is unclear what optimized variable length rankings look like, our third research question asks:
\begin{enumerate}[align=left, label={\bf RQ\arabic{*}},leftmargin=*, resume]
\item What document lengths does VLPL place in each position for different choices of $\theta$?
    \label{rq:rankings}
\end{enumerate}
Finally, since VLPL uses samples for its gradient estimation, it seems important to understand its sample-efficiency.
Our final research question is therefore:
\begin{enumerate}[align=left, label={\bf RQ\arabic*},leftmargin=*, resume]
    \item Does VLPL have comparable performance with a lower number of sampled rankings?
    \label{rq:N}
\end{enumerate}
Thus, our research questions cover different aspects of VLPL's performance and the variable presentation length setting.

\subsection{Datasets and length-specific attractiveness}

To our knowledge, there are no datasets featuring different presentation lengths for the same document that can be used to answer our research questions.
We thus use \textit{Yahoo! Webscope} \citep{Chapelle2011} and \textit{MSLR-WEB30k} \citep{qin2013introducing} datasets, consisting of queries and associated documents.
Each query-document pair is represented by a feature vector and an expert relevance judgement label $R\in\{0,1,\ldots,4\}$.
Yahoo contains 29{,921} queries with an average of 24 documents per query, whilst MSLR consists of 30{,}000 queries with an average of 125 documents per query.
For computational efficiency we focus on the queries with $250$ or fewer documents, retaining all queries in Yahoo and over $95\%$ of queries in MSLR.

To generate $L$ attractiveness labels for every document, we first remap the labels in Yahoo to $5L$ non-overlapping equally sized bins. For a document $d$ with original label $R$, the attractiveness $\rel$ is higher for longer document presentation lengths $l$, yet is still primarily determined by $R$: $\rel \in [(RL+l-1)/(5L), (RL+l)/(5L)]$.
On the other hand, due to the higher number of documents per query in MSLR, we emphasize the most relevant documents and double each bin's size relative to the last, renormalizing their sum.

In order to determine $\rel(d,l)$, we then follow~\citet{fang2019intervention}, generating $L+1$ scores using query-document features.
$L$ of these are normalized with other scores in $(d,l)$'s bin and map to a quantile within that bin's range, yielding $\rel(d,l)$.
Furthermore, as some documents may not actually benefit from a longer presentation~\citep{sweeney2006Effective}, we randomly reorder the $\rel$ values of different lengths for half of the documents determined by the $L+1$'th score.

\subsection{Models, baselines and training}

For our in-processing models, we use VLPL-1 and VLPL-2 to calculate the gradient of $\dm \Rq$ in order to train a \ac{MLP} that learns  $L$ scores $m(d,l)$ in Equation~\ref{eq:vpl2} for each query-document pair based on its features.
In contrast, for the post-processing approach, we use the binary cross-entropy loss to first train an \ac{MLP} to predict $\rel$ (instead of $m$).
Then for each test query, we instantiate a look-up table with a single score for each $(d,l)$ pair, and use 
VLPL-1/2 to learn these scores to maximize Equation~\ref{eq:opt_objective} specifically for that query, 
using the outputs of \ac{MLP} in place of  $\rel$.

We contrast our four VLPL models with the following baselines.
The first method family, referred to as sort-$l$, similar to the VLPL post-processing, use an \ac{MLP} to estimate $\rel$ for a single length $l$ to sort and place documents of only that length.
In addition, we train $L$ separate single-length PL-Rank-3 models using in-processing, which we denote as PLR-3-$l$.
For length-adaptive baselines, the greedy model uses MLP estimates of $\rel$ to repeatedly select an eligible $(d,l)$ pair maximizing $\theta(s,l)\rel(d,l)$.
However, the above policy may also prefer one longer document over multiple shorter yet only slightly less relevant ones.
The slot-avg model thus instead ranks $(d,l)$ pairs on $\theta(s,l)\rel(d,l) / l$, in effect choosing the document with the highest per-slot expected reward.
Finally, we also evaluate the above models with full access to the true relevances (oracle setting). 
In this case the fixed-length sorting models are an upper bound to PL-Rank-3, which is therefore omitted.

All models contain $1$-$5$ hidden layers of size $\{16, 32, \ldots, 128\}$ and sigmoid activations, trained with Adam optimizer \citep{kingma2017Adam} with learning rate, $L2$ regularization in $\{0, 1^{-6}, 1^{-5}, \ldots, 1\}$ and dropout in $[0, 0.8]$.
We report test set results over $5$ independent runs performed under identical circumstances for the best configuration of each model, selected on the validation set results of the first run.
VLPL trained with $N=10{,}000$ sampled rankings per update step.
All models trained in PyTorch \citep{paszke2019pytorch} on a single NVIDIA RTX A5000 GPU.

\subsection{Evaluation and position bias}

We evaluate all models using the \ac{EA} objective in Equation~\ref{eq:opt_objective}. 
We fix the number of slots $K$ at $30$ and maximum document length $L$ at $3$, representing one possible production setting.
Under the standard rank-based position bias model \citep{oosterhuis2020topkrankings}, users are assumed to observe the contents of each position with a probability $\theta(i)$ that only depends on the position.
As in our setting a document can span multiple positions, we consider the document to be observed as long as any of its slots are observed, yielding $\theta(s,l) = 1 - \prod_{i=s}^{s+l-1} \left(1 - \theta(i, 1)\right)$, where $
\theta(i, 1)$ is equivalent to $\theta(i)$ in Equation \ref{eq:standard_objective} -- the standard observation probability for rank $i$ \citep{joachims2003evaluating}.
We evaluate two choices of $\theta(i)$ -- the slowly decaying DCG weights $\theta_{\text{DCG}}(i) = 1/\log_2(i+1)$ and the steeper inverse rank $\theta_{\text{rank}^{-1}}(i) = 1/i$, corresponding to DCG and the sum of reciprocal ranks if $\forall l=1$.

\section{Results}
\label{sec:results}
\subsection{Ranking with known relevance}
\label{sub:results_oracle_setting}

We start by addressing the first research question (\ref{rq:oracle}): whether VLPL is able to more efficiently order documents and their lengths to achieve higher reward when compared to the baselines, when the attractiveness of each document is known.
Table~\ref{table:results_oracle} shows the performance in terms of \ac{EA} in this oracle setting where the number of slots $K=30$ and maximum document length is $L=3$ slots.

We can see that adaptively selecting document lengths has a very large impact on the \ac{EA} metric.
Both variable length baselines achieve a substantial and consistent improvement over the single-length baselines, with the greedy model performing better under $\theta_{\text{DCG}}$ and the slot-avg model with $\theta_{\text{rank}^{-1}}$.
Different strategies also achieve the best performance  among the single-length models in different settings, highlighting the importance of tailoring the document presentation to the setting where the ranking is shown.

On the other hand, both VLPL-1 and VLPL-2 demonstrate a very large and statistically significant improvement over \emph{all} models across every setting, with a relative performance increase in \ac{EA}$_\text{DCG}$ over the best baseline  on Yahoo exceeding $10\%$.
As such, VLPL is clearly able to generate highly attractive rankings in this setting.
Section~\ref{sub:sample_efficiency} further compares the performance of VLPL models.

Overall, we observe that adaptively choosing document presentation length is very important, with particularly high performance shown by VLPL.
We thus answer \ref{rq:oracle} positively: when true attractiveness is known, compared to the baselines, VLPL is able to find significantly more attractive variable presentation length rankings.

\begin{table}[tb]
 \centering
 \caption{
\ac{EA} for $K=30$ and $L=3$ in the oracle setting with DCG and inverse rank slot weights. Results are averages over five independent runs; standard deviations in parentheses. Best result in each task denoted in bold. Significant improvement of a model over all (other) baselines denoted by $\triangle$ (separate one-sided Wilcoxon signed rank tests, $p\!<\!0.05$). 
\vspace{-3.4mm}
 }
 \label{table:results_oracle}
   \resizebox{\columnwidth}{!}{\addtolength{\tabcolsep}{-0.3em}\begin{tabular}{cccccc}
    \toprule
    & \multicolumn{2}{c}{\textbf{MSLR}}&\multicolumn{2}{c}{\textbf{Yahoo!}}\\
    \midrule
    \textbf{Model}& \textbf{\ac{EA}$_\text{DCG}$} & \textbf{\ac{EA}$_{\text{rank}^{-1}}$} & \textbf{\ac{EA}$_\text{DCG}$} & \textbf{\ac{EA}$_{\text{rank}^{-1}}$}\\
    \midrule
sort-1 &${1.500}\pt$(0.002)&${0.908}\pt$(0.002)&${2.832}\pt$(0.000)&${1.695}\pt$(0.000)\\
sort-2 & ${1.628}\pt$(0.005)&${0.925}\pt$(0.004)&${3.113}\pt$(0.000)&${1.680}\pt$(0.000)\\
sort-3 &${1.733}\pt$(0.010)&${0.981}\pt$(0.006)&${3.137}\pt$(0.000)&${1.662}\pt$(0.000)\\\midrule
greedy &${1.917}\pt$(0.008)&${1.086}\pt$(0.005)&${3.326}^\triangle$(0.000)&${1.778}\pt$(0.000)\\
slot-avg & ${1.911}\pt$(0.005)&${1.113}^\triangle$(0.003)&${3.282}\pt$(0.000)&${1.926}^\triangle$(0.000)\\\midrule
VLPL-1 &$\textbf{2.064}^\triangle$(0.008)&$\textbf{1.187}^\triangle$(0.005)&${3.665}^\triangle$(0.000)&$\textbf{2.031}^\triangle$(0.000)\\
VLPL-2 &$\textbf{2.064}^\triangle$(0.008)&$\textbf{1.187}^\triangle$(0.005)&$\textbf{3.670}^\triangle$(0.000)&${2.030}^\triangle$(0.000)\\
    \bottomrule
   \end{tabular}}
\end{table} 

\subsection{Ranking with learned relevance}
\label{sub:ranking_with_learned_relevance}

We have established that VLPL can effectively optimize variable presentation length rankings when documents' attractivenesses are given.
Nevertheless, for previously unseen queries these are unknown.
Accordingly, we aim to answer our second research question (\ref{rq:performance}): can VLPL be used to generalize to unseen queries?

As shown in Table~\ref{table:results_main}, the performance of all models is substantially lower compared to the oracle setting (cf.\ Table~\ref{table:results_oracle}). Here, sort$-3$ now also consistently performs the best among sort$-l$ models.
Together, these observations suggest that relevance misestimation can have a significant impact on learned rankings in this setting.

Nevertheless, both in-processing and post-processing versions of VLPL still achieve statistically significant improvements over all baselines across all settings.
We observe particularly high performance for the post-processing approaches, and especially VLPL-2.
The high performance of VLPL strongly suggests that being able to tailor document length choices to the specific document pool of each query may be beneficial in the variable length ranking setting.

Interestingly, VLPL-1 appears to generally perform better than VLPL-2 for in-processing.
As the performance of both VLPL models, and particularly VLPL-2, is higher for post-processing,
we speculate that in-processing may introduce additional noise in the future reward and which may be amplified by reward-sharing of VLPL-2.

Nevertheless, we observe that all VLPL methods perform the best across all our settings.
Therefore, we can answer~\ref{rq:performance} affirmatively: VLPL methods also achieve the highest \ac{EA} on unseen queries with learned attractiveness.

\begin{table}[t]
 \centering
 \caption{
\ac{EA} for $K=30$ and $L=3$ with learned relevance and DCG and inverse rank slot weights. Results are averages over five independent runs, with standard deviations in parentheses. Best result for each task denoted in bold. Best in-processing model in each task is underscored. Significant improvement over all baselines denoted by $\triangle$ (separate one-sided Wilcoxon signed rank tests, $p<0.05$).\vspace{-3.3mm}
 }
 \label{table:results_main}
   \resizebox{\columnwidth}{!}{\addtolength{\tabcolsep}{-0.3em}\begin{tabular}{ccccc}
   \toprule
    & \multicolumn{2}{c}{\textbf{MSLR}}&\multicolumn{2}{c}{\textbf{Yahoo!}}\\
    \toprule
    \textbf{Model} & \textbf{\ac{EA}$_\text{DCG}$} & \textbf{\ac{EA}$_{\text{rank}^{-1}}$} & \textbf{\ac{EA}$_\text{DCG}$} & \textbf{\ac{EA}$_{\text{rank}^{-1}}$}\vspace{0.5mm}\\
    \toprule
    \multicolumn{5}{c}{In-processing}\\
    \toprule
PLR-3-1 &${0.777}\pt$(0.008)&${0.421}\pt$(0.003)&${2.466}\pt$(0.001)&${1.385}\pt$(0.000)\\
PLR-3-2 & ${0.839}\pt$(0.006)&${0.458}\pt$(0.003)&${2.699}\pt$(0.001)&${1.394}\pt$(0.002)\\
PLR-3-3 & ${0.931}\pt$(0.016)&${0.522}\pt$(0.008)&${2.761}\pt$(0.001)&${1.423}\pt$(0.001)\\\midrule
VLPL-1 &${0.953}^\triangle$(0.019)&$\underline{0.533}^\triangle$(0.008)&$\underline{2.982}^\triangle$(0.021)&$\underline{1.603}^\triangle$(0.001)\\
VLPL-2 &$\underline{0.959}^\triangle$(0.018)&${0.527}^\triangle$(0.005)&${2.976}^\triangle$(0.002)&${1.590}^\triangle$(0.003)\\\toprule\multicolumn{5}{c}{Post-processing}\\\toprule
sort-1 &${0.781}\pt$(0.005)&${0.419}\pt$(0.002)&${2.469}\pt$(0.001)&${1.385}\pt$(0.001)\\
sort-2 & ${0.843}\pt$(0.006)&${0.452}\pt$(0.003)&${2.698}\pt$(0.001)&${1.404}\pt$(0.001)\\
sort-3 &${0.929}\pt$(0.015)&${0.513}\pt$(0.010)&${2.757}\pt$(0.001)&${1.424}\pt$(0.001)\\\midrule
greedy &${0.935}\pt$(0.018)&${0.492}\pt$(0.011)&${2.747}\pt$(0.001)&${1.428}\pt$(0.005)\\
slot-avg & ${0.896}\pt$(0.020)&${0.516}\pt$(0.007)&${2.600}\pt$(0.010)&${1.520}\pt$(0.001)\\\midrule
VLPL-1 &$\textbf{0.971}^\triangle$(0.020)&${0.535}^\triangle$(0.009)&${3.042}^\triangle$(0.001)&${1.627}^\triangle$(0.000)\\
VLPL-2 &$\textbf{0.971}^\triangle$(0.020)&$\textbf{0.536}^\triangle$(0.008)&$\textbf{3.049}^\triangle$(0.001)&$\textbf{1.628}^\triangle$(0.001)\\
    \bottomrule\\
   \end{tabular}}
\end{table}

\subsection{Learned rankings}
\label{sub:learned_rankings}
\begin{figure}[tb]
\centering
\rotatebox{270}{
\addtolength{\tabcolsep}{-0.3cm}
\begin{tabular}{c c c c}
\rotatebox[origin=lt]{90}{\hspace{0.4cm}\small $\theta_{\text{rank}^{-1}}$}\hspace{0.3cm} &
\includegraphics[scale=0.390]{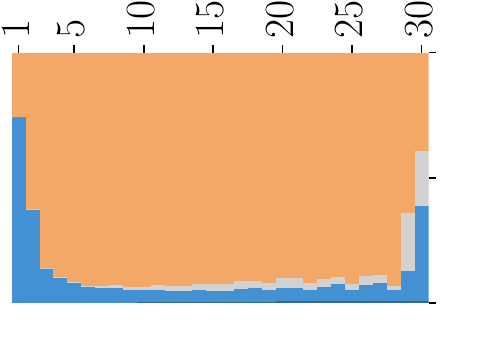} &
\includegraphics[scale=0.390]{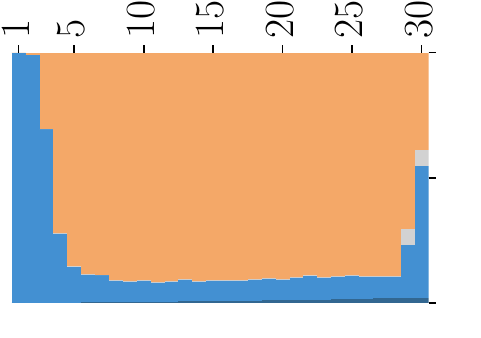} &
\includegraphics[scale=0.390]{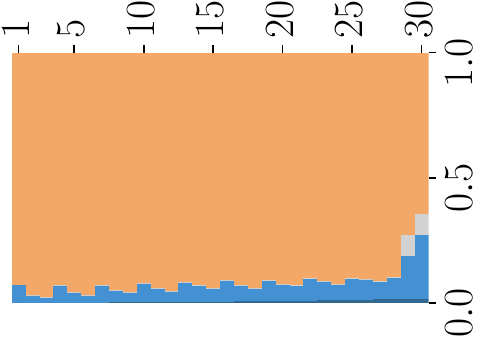}
\vspace{-0.32cm}
\\
\rotatebox[origin=lt]{90}{\hspace{-0.0cm}\small \hspace{0.1cm}oracle $\theta_{\text{rank}^{-1}}$}\hspace{0.3cm} &
\includegraphics[scale=0.390]{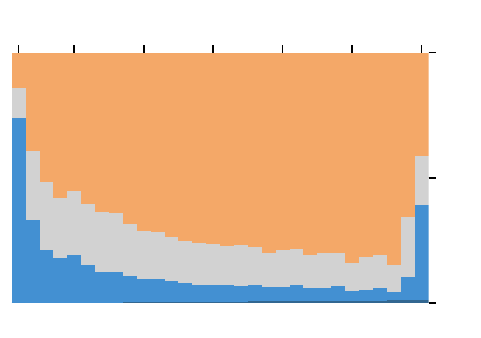} &
\includegraphics[scale=0.390]{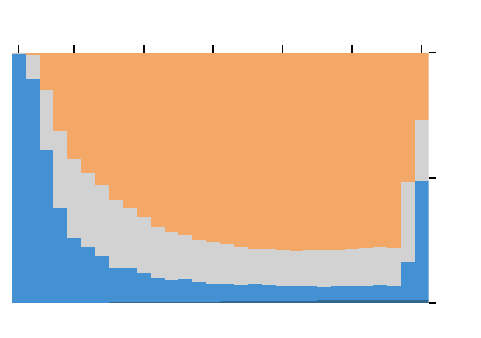} &
\includegraphics[scale=0.390]{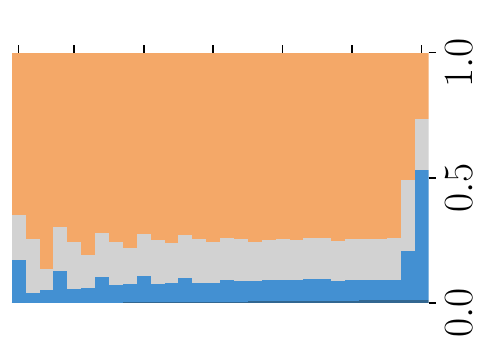}
\vspace{-0.32cm}\\
\rotatebox[origin=lt]{90}{\hspace{0.5cm}\small $\theta_\text{DCG}$}\hspace{0.3cm} &
\includegraphics[scale=0.390]{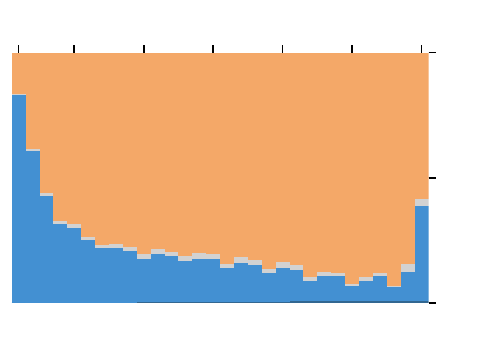} &
\includegraphics[scale=0.390]{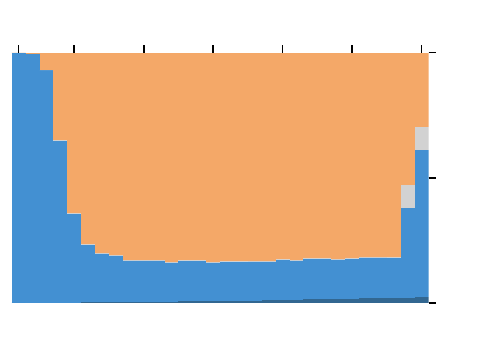} &
\includegraphics[scale=0.390]{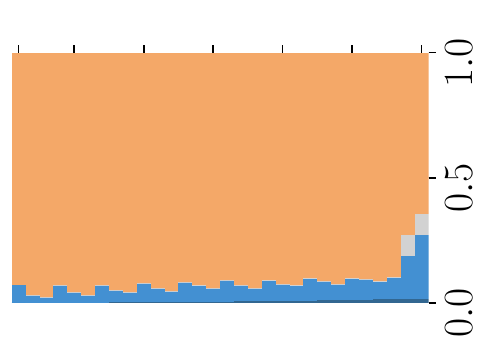}
\vspace{-0.32cm}\\
\rotatebox[origin=lt]{90}{\hspace{0.1cm}\small oracle $\theta_\text{DCG}$}\hspace{0.3cm} &
\includegraphics[scale=0.390]{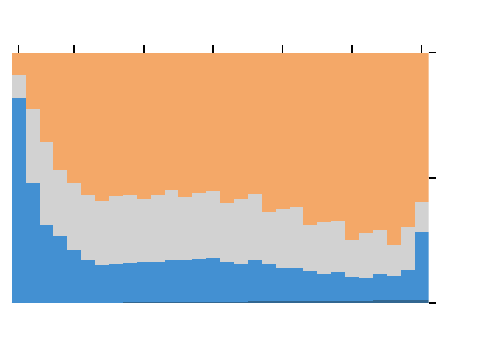} &
\includegraphics[scale=0.390]{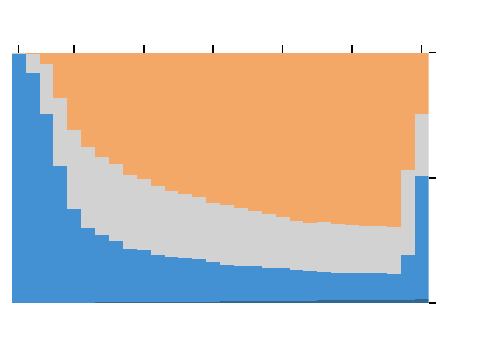} &
\includegraphics[scale=0.390]{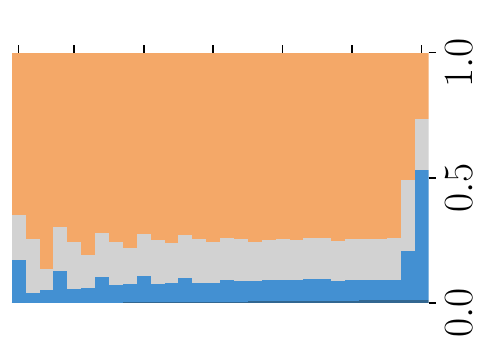}\vspace*{-0.25cm}\\
& \multicolumn{1}{c}{\rotatebox[origin=lt]{180}{\hspace{0.4cm}\small VLPL-2}}
&
 \multicolumn{1}{c}{\rotatebox[origin=lt]{180}{\hspace{0.4cm}\small slot-avg}}
 &
 \multicolumn{1}{c}{\rotatebox[origin=lt]{180}{\hspace{0.4cm}\small greedy}}\vspace{-0.3cm}%
\end{tabular}}
\begin{center}
\includegraphics[scale=0.45]{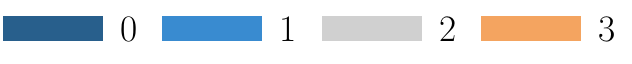}
\end{center}
\vspace{-0.3\baselineskip}
\vspace{-3.3mm}
\caption{Distribution of document lengths occupying each slot in learned rankings in MSLR. y-axis of stacked bar chart denotes the slot, x-axis denotes the proportion of rankings in the first test fold where the slot is occupied by the corresponding length. Length $0$ denotes padding.}
\label{fig:lengths}
\vspace{-0.5\baselineskip}
\end{figure}
As we've shown that VLPL can learn high reward rankings in the variable presentation length task, we then turn to~\ref{rq:rankings}: what document lengths does VLPL place in the learned rankings?

The lengths of documents occupying each slot for MSLR are shown in Figure~\ref{fig:lengths}. 
We first examine the behavior of VLPL-2 in the oracle setting, which corresponds to the task of finding the best possible ranking.
We see that VLPL-2 starts with a high probability of placing documents at length $1$ at the top of the ranking, followed by the increase in the proportion of lengths $2$ and $3$.
Length $3$ then occupies the vast majority of the ranking in the final positions.
Such behavior of showing lower-ranked documents at longer length was previously found to be beneficial by~\citet{marcos2015Effect}.
The shift from shorter to longer lengths is especially pronounced for $\theta_{\text{rank}^{-1}}$, suggesting that longer lengths and the associated increase in exposure may be particularly useful under stronger position bias.

Surprisingly, when document attractiveness is not known and instead estimated, VLPL follows a pattern that is roughly similar to the one in the oracle setting, yet with almost no use of length $2$.
We hypothesize that this may be due to the misestimation of $\rel$, which, as discussed in Subsection~\ref{sub:PRP} may lead to a different solution even if the relative ordering of all estimated $\rel(d,l)$ values is correct.

In contrast to VLPL, the slot-avg model places documents at a short length for longer, whilst the greedy model strongly prefers maximum length for most of the ranking, producing highly similar rankings for different $\theta$.
As the performance of VLPL-2 was previously shown to be significantly higher, the difference between models across different settings suggests that balancing document lengths is a critical factor that allows VLPL-2 to achieve strong performance, particularly in the oracle setting.

We saw that VLPL-2 strongly modulates its lengths. As such, for~\ref{rq:rankings} we conclude that: VLPL places shorter documents at the start of the ranking and progressively increases the placed document lengths, with a quicker transition for $\theta_{\text{rank}^{-1}}$.
{
\begin{figure}[tb]
\centering
\renewcommand{\arraystretch}{1}\begin{tabular}{c r r}
& \multicolumn{1}{c}{\hspace{0.45cm}\small MSLR}
& \multicolumn{1}{c}{\hspace{0.5cm}\small Yahoo}
\\
\vspace{-0.45cm}
\rotatebox[origin=lt]{90}{\hspace{-0.1cm} \small oracle \ac{EA}$_\text{DCG}$} &\hspace{-0.2cm}
\includegraphics[scale=0.41]{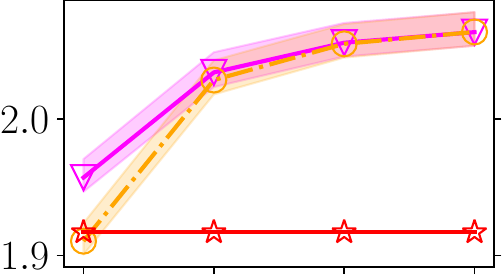} &
\raisebox{0.27mm}{\includegraphics[scale=0.41]{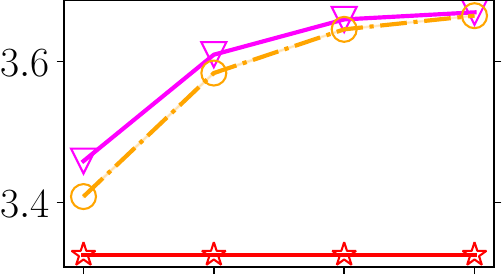}}%
\vspace*{0.35cm}%
\\
\rotatebox[origin=lt]{90}{\hspace{0.4cm} \small \ac{EA}$_\text{DCG}$}& 
\hspace{-0.2cm}\includegraphics[scale=0.41]{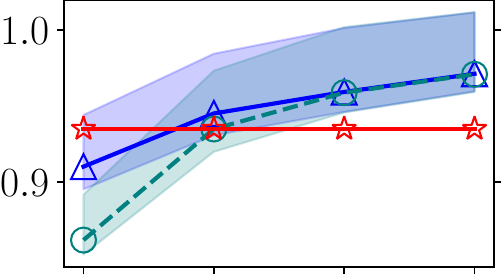} &
\includegraphics[scale=0.41]{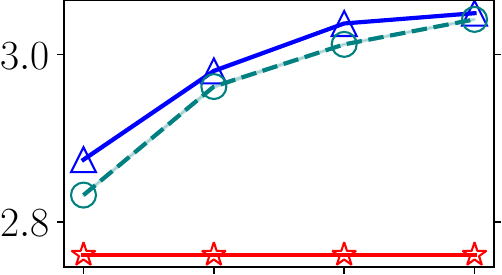}%
\vspace*{-0.1cm}
\\
\rotatebox[origin=lt]{90}{\hspace{-0.1cm} \small oracle \ac{EA}$_{\text{rank}^{-1}}$} &
\hspace{-0.2cm}\includegraphics[scale=0.41]{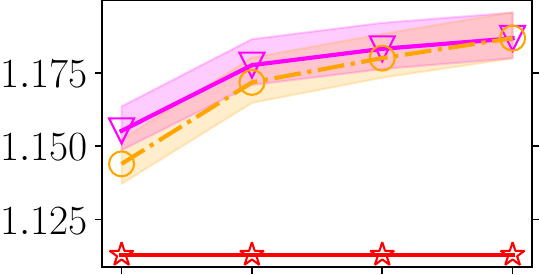} &
\includegraphics[scale=0.41]{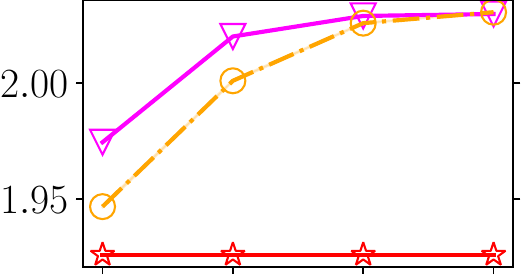}%
\vspace*{-0.05cm}
\\
\rotatebox[origin=lt]{90}{\hspace{0.4cm} \small \ac{EA}$_{\text{rank}^{-1}}$}& 
\hspace{-0.2cm}\includegraphics[scale=0.41]{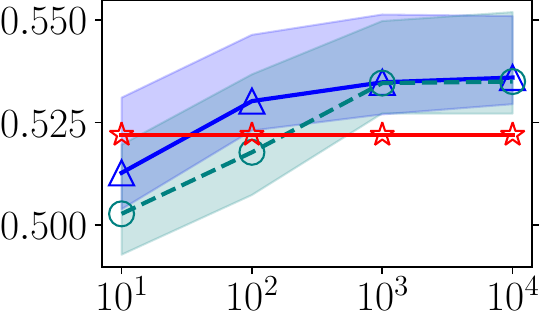} &
\includegraphics[scale=0.41]{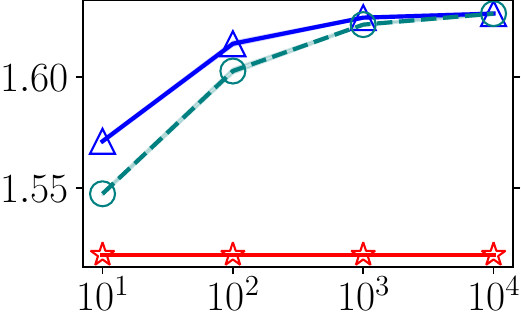}
\\
&\multicolumn{1}{c}{\small \hspace{1em} Number of Samples}
& \multicolumn{1}{c}{\small \hspace{1em} Number of Samples}\\
\multicolumn{3}{c}{
\includegraphics[scale=0.45]{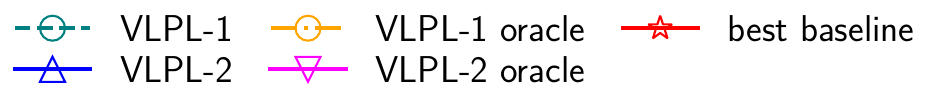}
}
\end{tabular}
\vspace{-0.5\baselineskip}
\vspace{-3.6mm}
\caption{\ac{EA} for post-processing VLPL models under different number of samples. Results are averages over five independent runs, the difference between the best and worst runs denoted by shading. Best baseline per setting denoted by star.}
\label{fig:N}
\vspace{-\baselineskip}
\end{figure}}

\subsection{Sample-efficiency}
\label{sub:sample_efficiency}

We then arrive at our final research question~(\ref{rq:N}): can VLPL models also achieve strong performance whilst using fewer samples?
To answer this, Figure~\ref{fig:N} shows VLPL performance under a varied number of samples $N$ in the oracle and non-oracle settings.

We can see that increasing the number of samples leads to improvements in \ac{EA}, with smaller, but still observable gains at larger sample numbers.
We also observe that whilst the performance of VLPL-1 and VLPL-2 is very similar at $N=10{,}000$, VLPL-2 shows stronger performance at lower values of $N$, with particularly substantial improvements over VLPL-1 at $N\leq 100$.
Nevertheless, for $N\geq 100$, both VLPL models generally achieve strong performance and always outperform the best baseline at $N=1{,}000$, highlighting the high applicability of VLPL to the variable presentation length ranking task.
As such, in response to~\ref{rq:N}, we affirm that: VLPL models, and especially VLPL-2, achieve large improvements over the baselines even when using fewer samples.
\section{Conclusion and Future Work}
\label{sec:conclusion}

In this paper we introduced the variable presentation length ranking setting, where the task is to simultaneously choose the document order and the size of each document's presentation in the ranking.
We showed that this setting is substantially more complex than standard ranking -- the best performance can only be achieved when document order and presentation lengths are decided jointly.
To tackle this problem, we introduced the variable document length Plackett-Luce model, and four VLPL methods for optimizing it.
Our experiments show 
that VLPL is highly suited for ranking with variable presentation lengths and 
underscore the importance of considering document presentation in \ac{LTR}.

Importantly, whilst we model document presentation lengths in terms of slots, it may also be possible to extend it to continuous values to represent pixel lengths of each document.
Furthermore, as different policies may be best suited for different presentation contexts, being able to learn a policy that is guaranteed to be effective across multiple settings (e.g., on mobile and desktop) may further help the application of such models.
Altogether, our work highlights the importance, the difficulty and the opportunities of \ac{LTR} with variable result presentation lengths.

\bigskip
\noindent
\textbf{Reproducibility}
Our experimental implementation is publicly available at \url{https://github.com/NKNY/varlenranksigir2025}.

\begin{acks}
This work used the Dutch national e-infrastructure with the support of the SURF Cooperative using grant no. EINF-9790. This work is partially supported by the Dutch Research Council (NWO), grant number VI.Veni.222.269.
\end{acks}

\clearpage
 \balance
\bibliographystyle{ACM-Reference-Format}
\bibliography{varlenrank2025}

\end{document}